\newtheorem{proposition}{Proposition}
\newtheorem{proof}{Proof}
\newtheorem{lemma}{Lemma}
\newtheorem{remark}{Remark}
\def\begcen{\begin{center}}
\def\endcen{\end{center}}
\newcommand{\col}{\mbox{col}}
\def\diag{\mbox{diag}}
\def\calc{{\cal C}}
\def\hal{\frac{1}{2}}
\def\L2{{\cal L}_2}
\def\L2e{{\cal L}_{2e}}
\def\rea{\mathbb{R}}
\def\diag{\mbox{diag}}
\def\begequarr{\begin{eqnarray}}
\def\endequarr{\end{eqnarray}}
\def\begequarrs{\begin{eqnarray*}}
\def\endequarrs{\end{eqnarray*}}
\def\begarr{\begin{array}}
\def\endarr{\end{array}}
\def\begequ{\begin{equation}}
\def\endequ{\end{equation}}
\def\begequs{\begin{equation*}}
\def\endequs{\end{equation*}}
\def\lab{\label}
\def\begdes{\begin{description}}
\def\enddes{\end{description}}
\def\begenu{\begin{enumerate}}
\def\begite{\begin{itemize}}
\def\endite{\end{itemize}}
\def\endenu{\end{enumerate}}
\def\lef[{\left[\begin{array}}
\def\rig]{\end{array}\right]}
\def\begcen{\begin{center}}
\def\endcen{\end{center}}
\def\begrem{\begin{remark}\rm}
\def\endrem{\end{remark}}
\begin{document}

\begin{frontmatter}
\title{Energy Shaping Control of an Inverted Flexible Pendulum Fixed to a Cart}
\author[iitb]{Prasanna S. Gandhi}\ead{gandhi@me.iitb.ac.in}
\author[supelec]{Pablo Borja}\ead{luisp.borja@lss.supelec.fr}
\author[supelec]{Romeo Ortega}\ead{ortega@lss.supelec.fr}

\address[supelec]{Laboratoire de Signaux et Syst\`emes, CentraleSupelec, 91192 Gif-sur-Yvette, France}
\address[iitb]{Suman Mashruwala Advanced Microengineering Laboratory, Department of Mechanical Engineering, Indian Institute of Technology.
      Powai, Mumbai 400076, India.}

%\maketitle

%\hfill \today
%%% ----------------------------------------------------------------------

\begin{abstract}
Control of compliant mechanical systems is increasingly being researched for several applications including flexible link robots and ultra-precision positioning systems. The control problem in these systems is challenging, especially with gravity coupling and large deformations, because of inherent underactuation and the combination of lumped and distributed parameters of a nonlinear system. In this paper we consider an ultra-flexible inverted pendulum on a cart and propose a new nonlinear energy shaping controller to keep the pendulum at the upward position with the cart stopped at a desired location. The design is based on a model, obtained via the constrained Lagrange formulation, which  previously has been validated experimentally.  The controller design consists of a partial feedback linearization step followed by a standard PID controller acting on two passive outputs. Boundedness of all signals and (local) asymptotic stability of the desired equilibrium is theoretically established. Simulations and experimental evidence assess the performance of the proposed controller.
\end{abstract}

\begin{keyword}
Energy shaping \sep compliant systems \sep Lagrangians systems \sep holonomic constraints \sep PID controllers.
\end{keyword}
\end{frontmatter}
\section{Introduction}
\lab{sec1}
%%%%%%%%%%%%%%%%%%%%%%%
%
The problem of stabilization of underactuated mechanical systems, both in the domain of ordinary and partial differential equations, has been widely addressed by several control researchers in recent years. In the domain of flexible mechanisms and robots, flexibility in the links is the main source of underactuation. If the deformations due to flexibility are small it is possible to use an {\em unconstrained} Lagrange formulation and invoke the  Assumed Modes Method (AMM) \cite{amm} to obtain a simple, finite--dimensional model---see ~\cite{dsurvey}  for a recent literature review. This modeling procedure, however, is inapplicable for systems with large deformations, for which a {\em constrained} Euler--Lagrange (EL) formulation is required.  This approach has been adopted in \cite{ojas} to derive an accurate model for a single ultra--flexible link fixed to a cart. Potential energy change owing to ultra--large deformations in the presence of gravity is considered in \cite{ojas} using the constant length of the beam as a holonomic constraint. For a survey on recent control techniques for this class of systems see \cite{ojas,leadlag,fem}.

The objective of this paper is to design an energy shaping controller with {\em guaranteed stability properties} for the model of a single ultra--flexible link fixed to a cart reported in \cite{ojas}. As is well known  \cite{ORTDONROM} the application of energy shaping controllers is stymied by the need to solve partial differential equations (PDEs) that identify the mechanical structure (Lagrangian or Hamiltonian)  that is assigned to the closed--loop. To propose a truly constructive energy shaping scheme, that does not require the solution of PDEs, it was recently proposed in \cite{DONetal} to relax the constraint of preservation in closed--loop of the EL structure. The design in \cite{DONetal} proceeds in two steps, first, we apply a partial feedback linearization (PFL) \cite{spong} that transforms the system into Spong's normal form---if this system is still EL, two new passive outputs are immediately identified. Second, a classical PID around a suitable combination of these passive outputs completes the design.

It is shown in the paper that this technique, developed for standard EL systems in \cite{DONetal}, is also applicable to the constrained EL system at hand. This extension is far from obvious, because the (lower order) dynamics that results from the projection of the system on the manifold defined by the constraint {\em is not} an EL system. In spite of this fact it is shown that, because of the workless nature of the forces introduced by the constraints, it is still possible to identify the two new passive outputs to which the PID is applied.

The remainder of the paper is organized as follows. Section \ref{sec2} presents the full constrained EL dynamics of the system and its reduced order projection. Section \ref{sec3} presents the proposed energy shaping control algorithm. Section \ref{sec4} presents the simulation results, while in Section \ref{secexp} we show the experimental ones. Section \ref{sec5} summarizes the work and outlines some future research. \\

%%%%%%%%%%%%
\noindent {\bf Notation:} Unless indicated otherwise, all vectors in the paper are {column} vectors.  Given $n \in \mathbb{N}$, $e_i \in \rea^n$ is the $i$--th Euclidean basis vector of $\rea^n$. For $x \in \rea^n$,  we denote $|x|^2:=x^\top x$. To simplify the expressions, the arguments of all mappings---that are assumed smooth---will be explicitly written only the first time that the mapping is defined. For a scalar function $V \colon \mathbb{R}^{n} \to \mathbb{R}$, we define $\nabla_{x} V := {\left({\partial V \over \partial x}\right)^{\top}}$ and  $\nabla^2_{x} V := {{\partial^2 V \over \partial x^2}}$---when clear from the context the subindex in $\nabla$ will be omitted.
%%
%%% ----------------------------------------------------------------------
\section{System Dynamics and Problem Formulation}
\lab{sec2}
%%%%%%%%%%%%%%%%%%%%%%%
%
In \cite{ojas} a dynamic model that accurately describes the behaviour of the single ultra--flexible link fixed to a cart depicted in Fig. \ref{fig1} is reported. The main feature of this model, which distinguishes it from other models, is that to take into account large deformations of the link its length  is assumed {\em constant}---giving rise to a holonomic constraint. The model is rigorously developed using a constrained EL formalism, combined with a standard application of the AMM, and its validity is experimentally validated.  In this section we present this model, first, in its constrained EL form and then in a reduced form---obtained via the elimination of the constrained equations.
\subsection{Constrained Euler--Lagrange Model}
\label{subsec21}
%%%%%%%%%%%%%
%
The model reported in \cite{ojas} admits a constrained EL representation of the form
\begin{eqnarray}
D(q) \ddot q + C(q,\dot q) \dot q +  B(q)+ R\dot q &=& e_3 \tau + \lambda A(q)\nonumber \\
\Gamma(q)&=&0,
 \label{EL}
\end{eqnarray}
where $q=\col(\theta,x_e,z) \in   \mathbb{D} \times \rea_+ \times \rea$ are the generalised coordinates, $R\geq 0$  is a matrix of damping coefficients. $D> 0$ is the inertia matrix, $C \dot q$ are the Coriolis and centrifugal forces, $B$ is a conservative force vector due to potential energy, $ \tau$ is control vector, $\lambda A$ is a vector of virtual forces due to the holonomic constraint, with $\lambda$ the Lagrange multiplier, and $\Gamma$ is the (constant length) constraint function given by
\begin{equation}
 \Gamma(q):=\int_{0}^{x_e}\sqrt{1+\left[\theta\phi'(x)\right]^{2}}\mathrm{d} x-L,
\label{constr1}
\end{equation}
with $L>0$ the length of the link and $\phi$ the mode shape function of the AMM \cite{amm} reported in \cite{LAUetal}, that is,
\begin{align}
 \phi(x)=&\cosh\left(\frac{\eta x}{L}\right)-\cos\left(\frac{\eta x}{L}\right)\nonumber \\ &+\gamma\left[\sin\left(\frac{\eta x}{L}\right)-\sinh\left(\frac{\eta x}{L}\right)\right],
 \end{align}
where $\eta$ and $\gamma$ are given in the table \ref{parameters}. The analysis made in \cite{ojas} considers only one mode where the deflection $\alpha(\theta,x)$ is given by
\begin{equation*}
 \alpha(x,\theta)=\phi(x)\theta.
\end{equation*}

The different terms entering into \eqref{EL} are defined as
\begin{align*}
 D(q)&:=\begin{bmatrix} D_1(x_e) && 0 && D_2 (x_e) \\ 0 && D_3 && 0 \\ D_2(x_e) && 0 && D_4 \end{bmatrix},\\ A(q)&=\begin{bmatrix} A_1(\theta,x_e) \\ A_2(\theta,x_e) \\ 0 \end{bmatrix}:=\nabla \Gamma(q)  ,  \;  R:=\diag\{ R_1, 0, R_3 \}, \\  C(q,\dot q)&:=\begin{bmatrix}\hal C_1(x_e) \dot{x}_e && \delta(x_e, \dot \theta, \dot{z}) && \hal C_2(x_e) \dot{x}_e\\ -\delta(x_e, \dot \theta, \dot{z}) &&   0 && -\hal C_2(x_e) \dot \theta \\ \hal C_2(x_e) \dot{x}_e &&  \hal C_2 (x_e)\dot\theta && 0\end{bmatrix},
\end{align*}
with
$$
\delta(x_e, \dot \theta, \dot{z}):=\hal C_1(x_e) \dot \theta+\hal C_2(x_e) \dot{z},
$$
and
\begequ
\lab{nabv}
B(q)=\begin{bmatrix} B_1(\theta,x_e) \\ B_2(\theta,x_e) \\ 0 \end{bmatrix}:=\nabla V(q)
\endequ
where $V$ is the potential energy of the system given by
\begin{equation*}
V(q)=\hal EI \int_0^{x_e}\frac{\left[\theta\phi''(x)\right]^{2}}{\left\lbrace1+\left[\theta\phi'(x)\right]^{2}\right\rbrace^3}\mathrm{d} x-D_3 g(L-x_e),
\end{equation*}
$E,I,D_3,R_1,R_3$ are defined in Table \ref{parameters} and the remaining functions are given in  \ref{appa}. \\

\begin{figure}[!t]
\centering
\includegraphics[scale=.4]{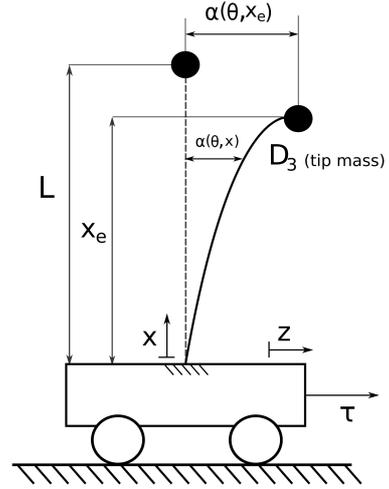}
\caption{Single ultra--flexible link with base excitation}
\label{fig1}
\end{figure}

{\bf Problem formulation:} Given the system (\ref{EL}) find a control input $\tau$ that places the beam at its vertical position with the cart stopped at the zero position, {\em i.e.}, that renders the point $q_*:=(0, L,0)$ a (locally) asymptotically stable equilibrium.\\

\begrem
In \cite{ojas} the model \eqref{EL} is obtained applying EL equations to the constrained Lagrangian
$$
\mathcal{L}(q,\dot q,\lambda) = T(\dot q, q) - V(q) + \lambda \Gamma(q)
$$
where $\lambda$ is a Lagrange multiplier and $T$ is the kinetic energy of the system given by
$$
T(\dot q, q)= \hal \dot q^\top D(q) \dot q.
$$
\endrem

\begrem
It should be noted that the well known \cite{ORTSPO} skew--symmetry property
\begequ
\lab{skesym}
\dot{D}(q)= C(q,\dot q)+C^\top(q,\dot q),
\endequ
is satisfied. Unfortunately, this important property is of no use for controller design in the present context.\\
\endrem

\begrem
In \cite{ojas} the analysis of the open--loop equilibria of \eqref{EL} is carried out. In particular, it is proven that the open--loop equilibrium set is given by
\begin{align}
\nonumber \mathcal{E}:=&\left\lbrace (\theta,x_e,z) \in     \mathbb{D} \times \rea_+ \times \rea \mid\right. \\   &\left. A_1(\theta,x_e)  B_2(\theta,x_e) -A_2(\theta,x_e)  B_1(\theta,x_e) =0\right\rbrace.
\label{Eqmset}
\end{align}
Furthermore, and not surprisingly, it is shown that the desired equilibrium $q_* \in \mathcal{E}$ and is unstable.
\endrem
\subsection{Reduced purely differential model}
\label{subsec22}
%%%%%%%%%%%%%
%
In this subsection we apply the standard constraint differentiation procedure \cite{MARHIL} to transform the algebro--differential equations \eqref{EL} to a purely differential form of reduced order. 
\begin{proposition}\em
\lab{pro1}
The system dynamics \eqref{EL} is equivalent to
\begin{eqnarray}
D_\theta (\theta) \ddot\theta +  D_{z} (\theta) \ddot z + C_\theta  (\theta)\dot\theta^2 + R_1 \dot\theta + B_\theta (\theta) &=& 0 \nonumber \\
D_z (\theta) \ddot \theta + D_4 \ddot z + C_z(\theta)  \dot \theta^2 +R_3 \dot z &=& \tau
\label{srEL}
\end{eqnarray}
with the functions $D_\theta, C_\theta, B_\theta, D_z$ and $C_z$ given in \eqref{coesrEL}.
\end{proposition}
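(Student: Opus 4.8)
The plan is to eliminate the Lagrange multiplier $\lambda$ and the coordinate $x_e$ from \eqref{EL} by projecting the differential equation onto the constraint manifold $\mathcal{M}:=\{q\mid\Gamma(q)=0\}$. Since $\Gamma$ in \eqref{constr1} does not depend on $z$ and
$$
A_2(\theta,x_e)=\frac{\partial\Gamma}{\partial x_e}=\sqrt{1+[\theta\phi'(x_e)]^{2}}>0,
$$
the implicit function theorem yields a smooth map $x_e=\rho(\theta)$ with $\Gamma(\theta,\rho(\theta))\equiv0$ and $\rho(0)=L$, so (locally) $\mathcal{M}$ is the graph of $\rho$ over the $(\theta,z)$--plane and $\sigma:=\col(\theta,z)$ parametrises it. Differentiating $\Gamma(\theta,\rho(\theta))\equiv0$ once gives $A_1\dot\theta+A_2\dot x_e=0$, i.e.\ $\dot x_e=\rho'(\theta)\dot\theta$ with $\rho'=-A_1/A_2$, and a second differentiation expresses $\ddot x_e$ as an affine function of $\ddot\theta$ with a $\dot\theta^{2}$ coefficient. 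Equivalently, along motions constrained to $\mathcal{M}$,
$$
\dot q=S(\theta)\dot\sigma,\qquad \ddot q=S(\theta)\ddot\sigma+\dot S(\theta)\dot\sigma,\qquad S(\theta):=\begin{bmatrix}1&0\\ \rho'(\theta)&0\\ 0&1\end{bmatrix}.
$$

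The key observation is that the columns of $S$ span $\ker A^\top$ (recall $A_3=0$), hence $S^\top A=0$. Left--multiplying the differential equation in \eqref{EL} by $S^\top$ therefore annihilates the constraint force $\lambda A$, and since $S^\top e_3=\col(0,1)$ one obtains the purely differential reduced system
$$
\big(S^\top D S\big)\ddot\sigma+S^\top\big(D\dot S+CS+RS\big)\dot\sigma+S^\top B=\col(0,1)\,\tau .
$$
On $\mathcal{M}$ one substitutes $x_e=\rho(\theta)$, so every coefficient above is a function of $\theta$ (and $\dot\theta$) alone. Reading off the entries gives the structure of \eqref{srEL}: the inertia block is $S^\top DS=\begin{bmatrix}D_1(\rho)+D_3\,\rho'^{2} & D_2(\rho)\\ D_2(\rho)& D_4\end{bmatrix}$, so $D_\theta$, $D_z$ and $D_4$ are as claimed; the damping collapses to $S^\top R S=\diag(R_1,R_3)$ because of the zero entry of $R$; and $S^\top B=\col\big(B_1(\rho)+\rho'B_2(\rho),\,0\big)$, which identifies $B_\theta$.

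It remains to treat the velocity--dependent terms $S^\top(D\dot S+CS)\dot\sigma$. These are a priori a full quadratic form in $(\dot\theta,\dot z)$, but inserting the explicit $C$, $D$ and $\dot S$ (whose only nonzero entry is $\rho''\dot\theta$ in the second row) one checks that all $\dot z^{2}$ and $\dot\theta\dot z$ contributions cancel, leaving precisely $C_\theta(\theta)\dot\theta^{2}$ in the first equation and $C_z(\theta)\dot\theta^{2}$ in the second, with the explicit coefficients listed in \eqref{coesrEL} (e.g.\ $C_z=C_2(\rho)\rho'$, while $C_\theta$ combines the $\hal C_1(\rho)\rho'$ term from $CS$ with the $D_3\rho'\rho''$ term from $D\dot S$). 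This cancellation is a manifestation of the structure of $C$ — the matching $\delta/-\delta$ and $\hal C_2\dot\theta$ off--diagonal entries, equivalently the skew--symmetry \eqref{skesym} — and it is the one non--routine point of the argument; everything else follows directly from the constraint being holonomic with $\partial\Gamma/\partial z=0$ and $\partial\Gamma/\partial x_e\neq0$. For the converse, any solution of \eqref{srEL} yields $q=\col(\theta,\rho(\theta),z)\in\mathcal{M}$, and defining $\lambda$ as $A_2^{-1}$ times the left--hand side of the second scalar equation of \eqref{EL} (well defined since $A_2\neq0$) reproduces all three original equations — the first and third by the construction of $S$, the second by the definition of $\lambda$ — which establishes the equivalence.
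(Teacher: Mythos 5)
Your proposal is correct, and it arrives at exactly the reduction the paper performs, but it packages the elimination differently: the paper differentiates the constraint, invokes the implicit function theorem via $A_2\neq 0$ to get $x_e=\hat x_e(\theta)$, and then eliminates $\lambda$ by direct substitution ``as done in \cite{ojas}'', summarizing the bookkeeping as lengthy but straightforward calculations; you instead parametrize the constraint manifold by $(\theta,z)$ and premultiply \eqref{EL} by the annihilator $S^\top$ with $S^\top A=0$, so $\lambda$ disappears automatically and the reduced coefficients are read off from $S^\top DS$, $S^\top RS$, $S^\top B$ and $S^\top(D\dot S+CS)\dot\sigma$. The two routes are mathematically the same projection, but yours buys transparency: with $\rho'=-A_1/A_2$ and $\rho''=-\zeta/A_2$ (the latter following from the second differentiated constraint, a step worth stating explicitly since it is what converts your $D_3\rho'\rho''$ term into the $D_3\frac{A_1}{A_2^2}\zeta$ term of \eqref{coesrEL}), the blocks reproduce $D_\theta$, $D_z$, $B_\theta$, $C_\theta$ and $C_z$ exactly, and the cancellation of the $\dot\theta\dot z$ and $\dot z^2$ terms is made visible rather than buried in the algebra; you also supply the converse direction (reconstructing $\lambda$ from the $x_e$--row using $A_2\neq 0$), which the paper leaves implicit in the word ``equivalent''. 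One small notational caution: your parenthetical ``recall $A_3=0$'' should refer to the third \emph{component} of the vector $A(q)=\col(A_1,A_2,0)$ being zero; in the paper's notation $A_3(\theta,x_e)$ is a nonzero function appearing in the twice--differentiated constraint, so this phrasing should be reworded to avoid a clash.
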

\begin{proof}\em
Differentiating the constraint equation~(\ref{constr1}), we get
\begin{align}
A_{1}(\theta,x_e) \dot{\theta}+A_2(\theta,x_e)  \dot{x}_e&=0 \nonumber\\
A_1(\theta,x_e) \ddot{\theta}+A_2(\theta,x_e) \ddot{x}_{e}&+A_3(\theta,x_e) \dot{\theta}\dot{x}_e\nonumber \\+A_{4}(\theta,x_e)\dot{x}_e^{2}+A_{5}(\theta,x_e) \dot{\theta}^{2}&=0
\label{cons3},
\end{align}
where  the functions $A_i,\;i=3,4, 5,$ are given in the Appendix. Now, the constraint  \eqref{constr1}  satisfies
$$
A_2(\theta,x_e)=\sqrt{1+\left[\theta\phi'(x_e)\right]^2},
$$
which is clearly bounded away from zero. Invoking the \textit{Implicit Function Theorem} \cite{khalil} we can guarantee the existence of a function $\hat x_e(\theta)$ such that
$$
\Gamma(\theta,\hat x_e(\theta))=0.
$$
Equivalently,  it is possible to express $x_e$ in terms of $\theta$, that  is
\begequ
\lab{hatxe}
x_e=\hat x_e(\theta).
\endequ

Replacing \eqref{cons3} in \eqref{EL} it is possible to eliminate the Lagrange multiplier $\lambda$---as done in \cite{ojas}. Moreover, using \eqref{hatxe}, we can  eliminate the coordinate $x_e$ to reduce the order of the system. After some lengthy, but straightforward calculations, this leads to the equations \eqref{srEL} with the definitions
\begin{align}
 D_\theta (\theta)&:= D_1(\hat x_e(\theta))  + D_3 \frac{{A^2_1}(\theta,\hat x_e(\theta))}{A^2_2(\theta,\hat x_e(\theta))} \nonumber \\
 C_\theta (\theta)&:= D_3 \frac{A_1(\theta,\hat x_e(\theta))}{A^2_2(\theta,\hat x_e(\theta))}\zeta
 - \hal C_1 (\hat x_e(\theta))\frac{A_1(\theta,\hat x_e(\theta))}{A_2(\theta,\hat x_e(\theta))}  \nonumber \\
 B_\theta (\theta)&:= B_1(\theta,\hat x_e(\theta))  - B_2(\theta,\hat x_e(\theta))\frac{A_1(\theta,\hat x_e(\theta))}{A_2(\theta,\hat x_e(\theta))} \nonumber \\
D_z (\theta)&:=  D_2 (\hat x_e(\theta)) \nonumber \\
C_z (\theta) &:= -C_2(\hat x_e(\theta))  {A_1(\theta,\hat x_e(\theta))  \over A_2(\theta,\hat x_e(\theta)) } ,
 \label{coesrEL}
\end{align}
where
\begin{align*}
\zeta=&A_5(\theta,\hat x_e(\theta)) + A_4(\theta,\hat x_e(\theta)) \frac{{A^2_1}(\theta,\hat x_e(\theta))}{{A^2_2(\theta,\hat x_e(\theta))}}\\&-A_3(\theta,\hat x_e(\theta))\frac{A_1(\theta,\hat x_e(\theta))}{A_2(\theta,\hat x_e(\theta))}.
\end{align*}
\end{proof}

\begrem
The first equation in \eqref{cons3} can be re-written as follows
\begin{equation}
A^\top(q) \dot q = 0.
\label{PCon}
\end{equation}
Consequently, differentiating the total energy of \eqref{EL}---that is given as $H(q,\dot q):=T(q,\dot q) + V(q)$---and using the skew--symmetry property \eqref{skesym} yields the usual power balance equation
$$
\dot H = - \dot q^\top R \dot q + \dot q_3 \tau.
$$
This means that the virtual forces introduced in the equations due to constrained Lagrange formulation are workless, that is, they are not responsible for addition  or removal of energy from the system. This key property is used later to identify the passive outputs used for the design of the energy shaping controller. 
\endrem

\begrem
The admissible initial conditions for the reduced system \eqref{srEL} are restricted to the set
$$
\{(\theta,z) \in \mathbb{D} \times \rea\;|\; \Gamma(\theta,\hat x_e(\theta))=0\},
$$
where, clearly, the system evolves.
\endrem
%
%%%%%%%%%
\section{Energy Shaping Control}
\label{sec3}
%%%%%%%%%%%%%
%
As explained in the introduction the energy shaping control  of \cite{DONetal} proceeds in three steps: a partial feedback linearization, identification of two passive outputs and the addition of a PID loop around a suitable combination of these outputs. These steps are applied to the system \eqref{srEL} in the following subsections.
\subsection{Partial feedback linearization}
\label{subsec31}
%%%%%%%%%%%%%
%
The lemma below describes a first static state--feedback that performs the PFL of the system  \eqref{srEL}.
%%%
\begin{lemma}\em
\lab{lem1}
Consider the system \eqref{srEL} in closed--loop with the control
\begin{align}
\nonumber \tau =& R_3 \dot z +\left(C_z-\frac{D_z}{D_\theta}C_\theta\right)\dot \theta ^{2}-\frac{D_z}{D_\theta}R_1\dot \theta -\frac{D_z}{D_\theta}B_\theta \\ &+\left(D_4-\frac{D_z^2}{D_\theta}\right)u.
 \label{tau}
\end{align}
Then, the system can be written in Spong's normal form
\begin{eqnarray}
D_\theta (\theta) \ddot\theta  + C_\theta  (\theta)\dot\theta^2 + R_1  \dot\theta + B_\theta (\theta) &=& G_\theta (\theta) u \nonumber \\
 \ddot z& = & u,
\label{rEL}
\end{eqnarray}
where
$$
G_\theta(\theta):=-D_z(\theta).
$$
 \end{lemma}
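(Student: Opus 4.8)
The plan is to verify directly that the feedback \eqref{tau} converts the second equation of \eqref{srEL} into the trivial double integrator $\ddot z = u$, and then substitute the resulting expression for $\ddot z$ back into the first equation of \eqref{srEL} to obtain the first line of \eqref{rEL}. This is a purely computational check: no structural properties (passivity, skew-symmetry, the constraint manifold) are needed here, only algebra, so I expect the argument to be short.

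\textbf{Step 1: the $z$-equation.} I would start from the second equation of \eqref{srEL}, namely $D_z \ddot\theta + D_4 \ddot z + C_z \dot\theta^2 + R_3 \dot z = \tau$, and from the first equation solve for $\ddot\theta$ in the form $\ddot\theta = -\tfrac{1}{D_\theta}\bigl(D_z \ddot z + C_\theta \dot\theta^2 + R_1 \dot\theta + B_\theta\bigr)$, which is legitimate since $D_\theta(\theta) > 0$ (this follows from $D>0$ and the structure in \eqref{coesrEL}; alternatively one invokes $D_4 - D_z^2/D_\theta \neq 0$ as the relevant nonsingularity). Substituting this into the $z$-equation collects the $\ddot z$ terms as $\bigl(D_4 - \tfrac{D_z^2}{D_\theta}\bigr)\ddot z$ and the remaining terms as exactly the first four terms on the right-hand side of \eqref{tau} with opposite sign; equating with $\tau$ given by \eqref{tau} leaves $\bigl(D_4 - \tfrac{D_z^2}{D_\theta}\bigr)\ddot z = \bigl(D_4 - \tfrac{D_z^2}{D_\theta}\bigr) u$, hence $\ddot z = u$ after dividing by the nonzero factor $D_4 - D_z^2/D_\theta$ (positivity of this Schur-complement-type term again follows from $D > 0$).

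\textbf{Step 2: the $\theta$-equation.} Now that $\ddot z = u$, I substitute $\ddot z = u$ directly into the first equation of \eqref{srEL}, which reads $D_\theta \ddot\theta + D_z \ddot z + C_\theta \dot\theta^2 + R_1\dot\theta + B_\theta = 0$; this becomes $D_\theta \ddot\theta + C_\theta \dot\theta^2 + R_1 \dot\theta + B_\theta = -D_z u$, i.e. the first line of \eqref{rEL} with $G_\theta(\theta) = -D_z(\theta)$ as claimed. No rearrangement of the original equations is required for this step.

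\textbf{Main obstacle.} There is no deep obstacle; the only point needing a word of justification is the invertibility of the scalar factor $D_4 - D_z^2/D_\theta$ that multiplies $u$ in \eqref{tau}, since the whole construction of \eqref{tau} presupposes one can divide by it. I would note that $D_4 - D_z^2 D_\theta^{-1}$ is (up to the positive factor $D_\theta$) the determinant of the $2\times 2$ inertia matrix $\begin{bmatrix} D_\theta & D_z \\ D_z & D_4 \end{bmatrix}$ of the reduced system \eqref{srEL}, which inherits positive definiteness from $D(q) > 0$ in \eqref{EL}; hence it is bounded away from zero on the admissible set and the feedback \eqref{tau} is well defined. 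With that remark in place the proof is just the two substitutions above.
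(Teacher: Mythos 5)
Your proposal is correct and follows essentially the same route as the paper's proof: solve the first equation of \eqref{srEL} for $\ddot\theta$, substitute into the second to reduce it (under \eqref{tau}) to $\ddot z = u$, and then put $\ddot z = u$ back into the first equation to obtain \eqref{rEL} with $G_\theta = -D_z$. Your additional remark justifying the nonvanishing of $D_4 - D_z^2/D_\theta$ as the Schur complement of the reduced inertia matrix (positive definite by congruence with $D>0$) is a sound point that the paper leaves implicit.
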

\begin{proof}\em
The proof proceeds rewriting the first equation of \eqref{srEL} as follows
 \begin{equation}
 \ddot \theta = -\frac{1}{D_\theta}\left(D_z \ddot z+C_\theta\dot \theta ^{2}+R_1\dot \theta +B_\theta\right). \label{ddth}
 \end{equation}
Now, replacing the latter expression in the second equation of \eqref{srEL}, we get

\begin{equation*}
-\frac{D_{z}}{D_\theta}\left(D_z \ddot z+C_\theta\dot \theta ^{2}+R_1\dot \theta +B_\theta\right)  + D_4 \ddot z + C_z  \dot \theta^2 +R_3 \dot z =\tau.
\end{equation*}
Substituting the control law \eqref{tau} in the equation above we obtain the second equality of \eqref{rEL}. The first equation results, replacing $\ddot z=u$ in the first equation of \eqref{srEL}.
\end{proof}
%%%%%%%%

\subsection{Identification of the passive outputs}
\label{subsec32}

In the following lemma the new cyclo--passive\footnote{We recall that the difference between cyclo--passive and  passive maps is that in the former one the storage function is not necessarily bounded from below.} maps for the system \eqref{rEL} are identified. 
%
%%%%%%
\begin{lemma}\em
\lab{lem2}
\label{passive}
Consider the system \eqref{rEL}. The signals
\begequarrs
y_a & := & \dot z\\y_u & := & G_\theta(\theta) \dot \theta,
\endequarrs
define cyclo--passive maps $u \mapsto y_a$ and $u \mapsto y_u$ with storage functions
\begin{eqnarray}
H_a(z)&=&\hal \dot z^2 \label{actH}\\
H_u(\theta)&=& \hal  D_\theta(\theta) \dot \theta^{2}+V_\theta(\theta) \label{underH},
\end{eqnarray}
respectively, where
$$
V_\theta(\theta):=V(\theta,\hat x_e(\theta)).
$$
More precisely, the time derivative of the functions $H_a$ and $H_u$ along the solutions of~(\ref{rEL}) satisfy the dissipation inequalities
\begequ
\dot H _a  \leq  u y_a, \quad
 \dot H _u  \leq  u y_u.
\label{dHuHa}
\endequ
\end{lemma}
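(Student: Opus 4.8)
The plan is to verify the two dissipation inequalities in \eqref{dHuHa} directly, by differentiating the candidate storage functions \eqref{actH} and \eqref{underH} along the trajectories of the normal form \eqref{rEL} and substituting the dynamics.

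For the actuated channel, I would start from $H_a(z)=\hal\dot z^2$, so that $\dot H_a = \dot z\,\ddot z$, and then use the second equation of \eqref{rEL}, namely $\ddot z = u$, to conclude $\dot H_a = \dot z\, u = u\,y_a$. This gives the inequality (in fact an equality) with no effort; the only ``content'' is recognizing that $y_a=\dot z$ is exactly the output that makes this work.

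For the underactuated channel I would compute $\dot H_u$ with $H_u(\theta)=\hal D_\theta(\theta)\dot\theta^2 + V_\theta(\theta)$. Differentiating gives $\dot H_u = D_\theta\dot\theta\ddot\theta + \hal D_\theta'(\theta)\dot\theta^3 + V_\theta'(\theta)\dot\theta$. Here I need the identification $B_\theta(\theta)=\nabla_\theta V_\theta(\theta)$, which follows because $B(q)=\nabla V(q)$ from \eqref{nabv} together with $V_\theta(\theta)=V(\theta,\hat x_e(\theta))$ and the constraint relation $A_1+A_2\,\hat x_e'(\theta)=0$ from the first identity in \eqref{cons3}; applying the chain rule, $V_\theta'(\theta)=B_1 + B_2\,\hat x_e'(\theta) = B_1 - B_2\,\tfrac{A_1}{A_2} = B_\theta(\theta)$, matching the definition in \eqref{coesrEL}. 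Then I substitute $D_\theta\ddot\theta = -C_\theta\dot\theta^2 - R_1\dot\theta - B_\theta + G_\theta u$ from the first equation of \eqref{rEL}, so that $\dot H_u = (-C_\theta\dot\theta^2 - R_1\dot\theta - B_\theta + G_\theta u)\dot\theta + \hal D_\theta'\dot\theta^3 + B_\theta\dot\theta = \big(\hal D_\theta' - C_\theta\big)\dot\theta^3 - R_1\dot\theta^2 + G_\theta u\,\dot\theta$. Since $y_u=G_\theta(\theta)\dot\theta$, this reads $\dot H_u = \big(\hal D_\theta' - C_\theta\big)\dot\theta^3 - R_1\dot\theta^2 + u\,y_u$, and because $R_1\ge 0$ we get $\dot H_u \le u\,y_u$ provided the cubic term vanishes, i.e. provided $C_\theta(\theta) = \hal D_\theta'(\theta)$.

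The main obstacle is therefore establishing the scalar ``skew-symmetry''-type identity $C_\theta(\theta)=\hal D_\theta'(\theta)$ for the reduced model. This is the one step that requires genuine computation rather than bookkeeping: one must differentiate $D_\theta(\theta)=D_1(\hat x_e)+D_3\tfrac{A_1^2}{A_2^2}$ with respect to $\theta$ using $\hat x_e'(\theta)=-A_1/A_2$ and the expressions for $A_3,A_4,A_5$, and check that it reproduces the formula for $C_\theta$ in \eqref{coesrEL}. I expect this follows from the full-model skew-symmetry \eqref{skesym} restricted to the constraint manifold together with the workless-constraint property noted in the remark after Proposition \ref{pro1} (the power balance $\dot H = -\dot q^\top R\dot q + \dot q_3\tau$ already encodes that the constraint forces and Coriolis terms do no net work), so the identity should be inherited rather than needing to be re-derived from scratch. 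Once that identity is in hand, both inequalities in \eqref{dHuHa} follow immediately, completing the proof.
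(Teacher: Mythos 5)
Your overall structure matches the paper's: $\dot H_a = \dot z\,\ddot z = u\,y_a$ is immediate from $\ddot z = u$, and for $H_u$ you correctly establish $V_\theta'(\theta)=B_\theta(\theta)$ via the chain rule and the constraint relation, substitute the first equation of \eqref{rEL}, and reduce the claim to the scalar identity $C_\theta(\theta)=\hal\,\partial_\theta D_\theta(\theta)$ (equivalently $\dot D_\theta = 2C_\theta\dot\theta$). The gap is that you never prove this identity: you correctly single it out as the one step requiring genuine computation and then leave it as an expectation (``I expect this follows\ldots should be inherited''). That identity is the mathematical core of the lemma---the paper's proof spends essentially all of its length verifying it, by differentiating $D_\theta = D_1(\hat x_e)+D_3 A_1^2/A_2^2$, recognizing the partial derivatives of $A_1,A_2$ as the functions $A_3,A_4,A_5$ appearing in \eqref{cons3}, eliminating $\dot x_e$ through $\dot x_e=-(A_1/A_2)\dot\theta$, and matching the result with the definition of $C_\theta$ in \eqref{coesrEL}. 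An unproved appeal to inheritance does not substitute for that verification, particularly since the paper stresses that the reduced dynamics is \emph{not} an EL system, so structural properties of \eqref{EL} cannot be transferred blindly.

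Moreover, the inheritance argument as you state it does not quite fit: the workless-constraint remark gives the power balance $\dot H = -\dot q^\top R\dot q + \dot q_3\tau$ for the \emph{original} system with input $\tau$ and for the \emph{total} energy, whose restriction to the constraint manifold is $\hal D_\theta\dot\theta^2 + D_z\dot\theta\dot z + \hal D_4\dot z^2 + V_\theta$---not $H_u$, and not the partially linearized system \eqref{rEL} with input $u$. The route can be made rigorous, but it needs an extra step you did not supply: differentiate that restricted energy directly along \eqref{srEL}, compare with the power balance, and conclude that $\left(\hal\partial_\theta D_\theta - C_\theta\right)\dot\theta^3 + \left(\partial_\theta D_z - C_z\right)\dot\theta^2\dot z$ must vanish for all (unconstrained, in the reduced coordinates) velocities, which yields $C_\theta=\hal\partial_\theta D_\theta$ pointwise. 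Either carry out that coefficient-matching argument or perform the paper's direct differentiation; as written, the proof of $\dot H_u \le u\,y_u$ is incomplete.
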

\begin{proof}\em
First, notice that
\begequarr
\nonumber\dot V_\theta & = & {\partial V \over \partial \theta}\dot \theta + {\partial V \over \partial \hat x_e}\dot {\hat x}_e\\
\nonumber                      & = & B_1(\theta,x_e) \dot \theta + B_2(\theta,x_e)\dot {\hat x}_e\\
\nonumber                      & = & B_1(\theta,\hat x_e(\theta)) \dot \theta + B_2(\theta,\hat x_e(\theta))\dot x_e\\
\nonumber                      & = &\left[B_1(\theta,\hat x_e(\theta))  - B_2(\theta, \hat x_e(\theta)){A_1(\theta,\hat x_e(\theta)) \over A_2(\theta,\hat x_e(\theta))}\right]\dot \theta   \\
\lab{dotv}                     & = & B_\theta(\theta) \dot \theta,
\endequarr
where we have used \eqref{nabv} for the second identity, \eqref{hatxe} for the third one and the first equation in \eqref{cons3} for the fourth one.

Now we will prove that
\begin{align}
\dot{D}_{\theta}= 2C_\theta \dot\theta.
\label{SkSr}
\end{align}
Indeed, computing the time derivative of $D_{\theta}$ we get
\begin{align*}
 \dot D_{\theta}&=\dot D_1(\hat x_e(\theta))+2D_{3}\frac{A_{1}(\theta,\hat x_e(\theta))}{A_{2}^{2}(\theta,\hat x_e(\theta))}\dot A_{1}(\theta,\hat x_e(\theta)) \\ &-2D_{3}\frac{A_{1}^{2}(\theta,\hat x_e(\theta))}{A_{2}^{3}(\theta,\hat x_e(\theta))}\dot A_2(\theta,\hat x_e(\theta))
 \\&=\frac{\partial D_1}{\partial \hat{x}_e}\dot{\hat{x}}_e+2D_{3}\frac{A_{1}(\theta,\hat x_e(\theta))}{A_{2}^{2}(\theta,\hat x_e(\theta))}\left( \frac{\partial A_{1}}{\partial \theta}\dot{\theta}+ \frac{\partial A_{1}}{\partial \hat{x}_e}\dot{\hat{x}}_e\right)\\ &-2D_{3}\frac{A_{1}^{2}(\theta,\hat x_e(\theta))}{A_{2}^{3}(\theta,\hat x_e(\theta))}\left( \frac{\partial A_{2}}{\partial \theta}\dot{\theta}+ \frac{\partial A_{2}}{\partial \hat{x}_e}\dot{\hat{x}}_e\right)
 \\&=C_1(\hat x_e(\theta)) \dot{x}_e\\&+2D_{3}\frac{A_{1}(\theta,\hat x_e(\theta))}{A_{2}^{2}(\theta,\hat x_e(\theta))}\left[ A_5(\theta,\hat x_e(\theta)) \dot{\theta}+\frac{1}{2}A_3(\theta,\hat x_e(\theta))\dot{x}_e \right. \\ &\left. -\frac{A_{1}(\theta,\hat x_e(\theta))}{A_{2}(\theta,\hat x_e(\theta))}\left( \frac{1}{2}A_3(\theta,\hat x_e(\theta))\dot{\theta}+A_4(\theta,\hat x_e(\theta))\dot{x}_e\right)\right]
 \\&=\left[-C_{1}(\hat x_e(\theta))\frac{A_{1}(\theta,\hat x_e(\theta))}{A_{2}(\theta,\hat x_e(\theta))}+2D_{3}\frac{A_{1}(\theta,\hat x_e(\theta))}{A_{2}^{2}(\theta,\hat x_e(\theta))}\zeta\right]\dot \theta
 \\&=2C_\theta \dot\theta,
\end{align*}
where we used the first equation of \eqref{cons3} to eliminate $\dot x_e$ in the fourth identity.

The time derivative of \eqref{underH} along the system trajectories is
\begin{align*}
\dot H _u&=  D_\theta \dot\theta  \ddot \theta + \hal \dot D_\theta \dot \theta^2+\dot V_\theta \\
&=\left( G_\theta  u-C_\theta\dot \theta ^{2}-R_1\dot \theta +\hal \dot D_\theta\dot \theta\right)\dot \theta \\
&=-R_1 \dot\theta^{2}+G_\theta \dot \theta u\leq  u y_u.
\end{align*}
where \eqref{ddth} and \eqref{dotv} were used in the second equality while the third one was obtained invoking \eqref{SkSr}.

On the other hand, the time derivative of \eqref{actH} along the system trajectories is
\begin{align*}
\dot H _a&=\dot z \ddot z = u y_a.
\end{align*}
This completes the proof.
\end{proof}
\subsection{PID controller}
\label{subsec33}
%%%%%%%%%%
%
Similarly to \cite{DONetal} the controller design is completed adding a PID around a suitably weighted sum of the two cyclo--passive outputs ($y_a$ and $y_u$) identified in Lemma \ref{lem2}. More precisely, the controller implements the relationship
\begequ
\lab{pidcon}
k_e u = -\left(K_P \tilde y+{K_I }\int_0^t \tilde y(s)ds + K_D \dot{\tilde y}\right),
\endequ
where
\begequ
\lab{tily}
\tilde y:=k_a y_a + k_u y_u
\endequ
with $k_e,k_a,k_u \in \rea$ and $K_P,K_I,K_D \in \rea_+$  the PID gains. As explained in \cite{DONetal}, and illustrated below, these gains are selected to shape the energy function.

To implement the controller \eqref{pidcon} {\em without differentiation} the term   $ \dot{\tilde y}$ is replaced by its evaluation along the system dynamics \eqref{rEL}. Since the system is relative degree one this brings along some terms depending on $u$ that are moved to the left hand side of \eqref{pidcon}. Some lengthy, but straightforward, calculations show that \eqref{pidcon} is equivalent to
$$
K(\theta) u =   -\left(K_P \tilde y+{K_I }\int_0^t \tilde y(s)ds\right) -K_D k_u S(\theta,\dot \theta),
$$
where we defined the functions
 \begin{align}
S(\theta,\dot \theta)&:=\dot G_\theta \dot\theta - \frac{G_\theta}{D_\theta}\left[C_\theta {\dot \theta}^2+R_1 \dot \theta + B_\theta\right]
\nonumber\\
K(\theta)&:=k_{e}+K_D\left[k_{a}+k_{u}\frac{G_{\theta}^{2}(\theta)}{D_\theta(\theta)}\right].
\nonumber
\end{align}
Clearly, a sufficient condition for the controller to be implementable is that the function $K$ is bounded away from zero, that is,
\begequ
|K| \geq \delta >0.
\label{cond3}
\endequ

To analyse the stability of the system \eqref{rEL} in closed--loop with the PID \eqref{pidcon}, \eqref{tily} we propose the function
\begin{align*}
W(t,\tilde y, \theta,\dot \theta,\dot z):=&   k_e[k_a H_a(\dot z)+k_u H_u(\theta,\dot \theta)]\\&+ {K_I \over 2}\left(\int_0^t \tilde y(s)ds\right)^2+ {K_D \over 2} \tilde y^2,
\end{align*}
and make the reasonable assumption that the friction forces acting on the beam are negligible, hence set $R_1=0$. The derivative of $W$  yields
\begequarrs
\dot W & = & k_e(k_a \dot H_a+k_u \dot H_u)+{K_I}\tilde y \int_0^t \tilde y(s)ds+ {K_D \over 2} \tilde y \dot {\tilde y}\\
          & = & k_e \tilde y u +{K_I}\tilde y \int_0^t \tilde y(s)ds+ {K_D \over 2} \tilde y \dot {\tilde y}\\
          & = & -K_P \tilde y^2,
\endequarrs
where we used the dissipation inequalities \eqref{dHuHa}---that under the assumption $R_1=0$ become equalities---to get the second identity and replaced \eqref{pidcon} to find the last one.

The final step in our stability analysis is to show that the function $W$ can be expressed as a positive definite (with respect to the desired equilibrium) function of the state $(\theta,z, \dot \theta,\dot z)$ of the system \eqref{rEL}. Notice that for this reduced system the desired equilibrium is simply the {\em origin}.

To express $W$  as a function of the state we only need to deal with the integral term. For, we define the function
$$
V_{N}(\theta):=-D_3\int^{\theta}_{0}\phi(\hat{x}_e(s))\mathrm{d} s-\left(\rho \mathcal{A}_0 \int^L_0 \phi(x) \mathrm{d}x\right)\theta,
$$
whose time derivative is given by
 \begin{align}
  \nonumber \dot{V}_N&= -D_3\phi(\hat x_e (\theta))\dot{\theta}-\left(\rho \mathcal{A}_0 \int^L_0 \phi(x) \mathrm{d}x\right)\dot{\theta} \\
  \nonumber &=-D_{z}\dot\theta =G_\theta\dot{\theta} =y_u.
  \label{VNdot}
 \end{align}
Consequently
$$
\int_0^t \tilde y(s)ds=k_a z(t)+ k_u V_N(\theta(t)) + c,
$$
where $c \in \rea$ is an integration constant. Using the latter and the definitions of $H_a,H_u$ and $\tilde y$ we can prove that, up to an additive constant,

\small
\begin{equation}
W(t,\tilde y, \theta,\dot \theta,\dot z) =\hal \begin{bmatrix} \dot \theta \\ \dot z  \end{bmatrix}^\top D_d(\theta) \begin{bmatrix} \dot \theta \\ \dot z  \end{bmatrix} +V_d(\theta,z)=:H_d(\theta,z,\dot \theta,\dot z)
\label{HD}
\end{equation}
\normalsize
where we defined
\begin{align}
 D_{d}(\theta)&:=\begin{bmatrix}
             k_{e}k_{u}D_\theta(\theta) +k_{u}^{2}K_DG_\theta^2(\theta) && k_{a}k_{u}K_DG_\theta(\theta)\\  k_{a}k_{u}K_D G_\theta(\theta) && k_{e}k_{a}+k_{a}^{2}K_D
              \end{bmatrix}
              \label{Dd}\\
V_d(\theta,z)&:= k_{e}k_{u}V_\theta(\theta)+\hal K_{I}\left[k_{a}z+k_{u}V_{N}(\theta)\right]^{2}.
\label{Vd}
\end{align}

\begrem
Without the assumption that $R_1=0$ a term $-k_ek_u R_1 \dot \theta^2$ appears in $\dot W$. As will be shown below, see also \cite{DONetal} and  Remark \ref{rem7}, to make the upward position a minimum of the total energy function $H_d$ it is necessary to flip the potential energy of the pendulum, which is done selecting $k_ek_u < 0$, making positive the dissipation term. The deleterious effect of dissipation in energy shaping methods is well known and has been reported in various references \cite{WOOLetal,GOMAvdS}.
\endrem
\subsection{Main stability result}
\label{subsec34}
%%%%%%%%%%
%
The proposition below, which essentially  gives conditions on the controller gains to ensure $H_d$ is positive definite, is the main result of this section. 
%
%%%
\begin{proposition}\em
\lab{pro2}
Consider the system \eqref{EL} in closed--loop with the controller \eqref{tau} where the outer--loop control $u$ is given by the PID \eqref{pidcon} with
\begequ
\lab{tiltily}
\tilde y = k_a \dot z + k_u G_\theta(\theta) \dot \theta.
\endequ

Set the constant gains $k_e,k_a$ and the PID gains $K_P,K_I$ and $K_D$ to {\em arbitrary positive} numbers while $k_u$ is negative and, for some small $\epsilon >0$, satisfies
\begequ
\lab{conku}
k_{u} \leq  - \calc\left(k_a+ {k_e \over K_D}\right)-\epsilon,
\endequ
where $\calc$ is a positive constant verifying
\begequ
\lab{M}
\calc \geq \frac{D_\theta(\theta)}{G^2_\theta(\theta)}.
\endequ
\begite
\item[(i)] The origin of the reduced dynamics \eqref{srEL}, which corresponds to the desired equilibrium $q_*=(0,L,0)$ of \eqref{EL}, is {\em stable} with Lyapunov function $H_d$  given in \eqref{HD}.
\item[(ii)] It is {\em asymptotically} stable if the signal $\tilde y$ defined in \eqref{tiltily} is detectable with respect to \eqref{rEL}.\\
\endite

\end{proposition}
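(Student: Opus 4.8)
The plan is to use $H_d$ from \eqref{HD} as a Lyapunov function for the closed loop \eqref{rEL}--\eqref{pidcon}. The hard analytic work is already in hand from the derivation preceding \eqref{HD}: the dissipation equality $\dot H_d=\dot W=-K_P\tilde y^2\le0$ (valid since $R_1=0$) and the identity $W=H_d$. So the proof of (i) comes down to showing that, under \eqref{conku}--\eqref{M}, the function $H_d$ is positive definite with respect to the origin of the coordinates $(\theta,z,\dot\theta,\dot z)$; through $q=(\theta,\hat x_e(\theta),z)$ with $\hat x_e(0)=L$ this origin is exactly $q_*$ of \eqref{EL}, and the feedback \eqref{tau} is a diffeomorphic change of input, so stability transfers between \eqref{srEL}, \eqref{rEL} and \eqref{EL}. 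Positive definiteness of $H_d$ splits into two claims: $D_d(\theta)>0$, and $V_d$ has a strict local minimum at the origin.

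First I would establish $D_d(\theta)>0$. Its $(2,2)$ entry $k_a(k_e+k_aK_D)$ is positive because $k_a,k_e,K_D>0$, and a short expansion of \eqref{Dd} gives
\[
\det D_d(\theta)=k_ek_ak_u\big[D_\theta(\theta)(k_e+k_aK_D)+k_uK_DG_\theta^2(\theta)\big].
\]
Since $k_ek_ak_u<0$, positivity of this determinant is equivalent to the bracket being negative, i.e.\ to $k_u<-\tfrac{D_\theta(\theta)}{G_\theta^2(\theta)}\big(k_a+\tfrac{k_e}{K_D}\big)$ for every admissible $\theta$; because $k_a+\tfrac{k_e}{K_D}>0$, the bound \eqref{M} together with the gain choice \eqref{conku} yields exactly this (the $\epsilon$ in \eqref{conku} providing uniformity in $\theta$). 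Hence $D_d(\theta)>0$ and the kinetic part of $H_d$ is positive definite in $(\dot\theta,\dot z)$.

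Next I would show the origin is a strict local minimum of $V_d$ in \eqref{Vd}. Using $\nabla V_\theta=B_\theta$ and $\nabla V_N=G_\theta$ (from \eqref{dotv} and the computation of $\dot V_N$), together with $B_\theta(0)=0$ (the origin is an open--loop equilibrium) and $V_N(0)=0$, one checks that $(\theta,z)=(0,0)$ is a critical point of $V_d$, with Hessian
\[
\nabla^2 V_d(0)=\begin{bmatrix} k_ek_uV_\theta''(0)+K_Ik_u^2g_0^2 & K_Ik_ak_ug_0\\[1mm] K_Ik_ak_ug_0 & K_Ik_a^2\end{bmatrix},\qquad g_0:=G_\theta(0).
\]
Now $k_e>0$, $k_u<0$ and $V_\theta''(0)<0$---this last being precisely the open--loop instability of $q_*$ established in \cite{ojas}, i.e.\ the ``flipping'' of the pendulum potential discussed in the remark after \eqref{Vd}---so the $(1,1)$ entry is positive and $\det\nabla^2 V_d(0)=k_ek_a^2K_I\,k_uV_\theta''(0)>0$; hence $\nabla^2 V_d(0)>0$ and $V_d$ is positive definite on a neighbourhood of the origin. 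Combined with $D_d>0$, $H_d$ is positive definite near the origin, and with $\dot H_d\le0$ Lyapunov's theorem proves (i). For (ii) I would restrict attention to a compact sublevel set of $H_d$ contained in that neighbourhood (invariant since $\dot H_d\le0$, which also gives boundedness of all signals) and apply LaSalle's invariance principle: every trajectory converges to the largest invariant subset of $\{\tilde y=0\}$, and detectability of $\tilde y$ with respect to \eqref{rEL} reduces this set to the origin, upgrading stability to asymptotic stability.

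I expect the positive definiteness of $V_d$ to be the delicate step: away from the origin $k_ek_uV_\theta$ is sign--indefinite and the squared integral term contributes only a positive \emph{semi}definite, rank--one form, so the conclusion is genuinely local and hinges on the strict sign of $V_\theta''(0)$ supplied by the open--loop instability. The determinant manipulation for $D_d$ and the reduction of the resulting inequality to \eqref{conku} via \eqref{M} is the other point needing care, though it is a routine computation.
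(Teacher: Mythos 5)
Your proposal is correct and follows essentially the same route as the paper: positivity of the $(2,2)$ entry plus the same factorization of $\det D_d$ reduced to \eqref{conku}--\eqref{M}, a critical-point/Hessian computation at the origin for $V_d$ with $\det\nabla^{2}V_d(0)=k_ek_a^2K_Ik_u\nabla^2V_\theta(0)>0$ hinging on $\nabla^2V_\theta(0)<0$, and then Lyapunov theory (with your LaSalle/detectability argument making explicit what the paper leaves to ``classical Lyapunov theory'' for claim (ii)). The only element of the paper's proof you omit is the verification of the realisability condition \eqref{cond3}: since the system is relative degree one, the PID \eqref{pidcon} is an implicit relation in $u$ and is implementable only if $K(\theta)=k_e+K_D\bigl[k_a+k_u G_\theta^2(\theta)/D_\theta(\theta)\bigr]$ is bounded away from zero; the paper shows this follows from the same inequalities \eqref{conku}--\eqref{M}, so adding that short check would make your argument complete.
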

%
%%%%%%
\begin{proof}\em
In Lemma \ref{lem1} it has been shown that the dynamics of the system \eqref{EL} in closed--loop with the controller \eqref{tau} is described by \eqref{rEL}. Therefore, given the derivations above, it only remains to prove that the non--increasing function $H_d$, defined in \eqref{HD}, is positive definite. This will be established proving that, under the conditions of the proposition $D_d >0$ and $V_d$ has an isolated minimum at the origin.

To prove the first claim notice from \eqref{Dd} that the $(2,2)$ term of $D_d$ is positive. Hence it only remains to show that its determinant is also positive. Now,
\begin{align*}
 \det \{D_d\}&=k_{e}k_{u}D_\theta\left( k_{e}k_{a}+k_{a}^{2}K_D \right)+k_{e}k_{u}^{2}k_{a}K_DG_\theta^{2} \\
 &=k_{e}k_{u}\left[D_\theta\left(  k_{e}k_{a}+k_{a}^{2}K_D \right)+k_uk_aK_DG_\theta^{2}\right]\\
 &=k_{e}k_{u}k_{a}K_D{G^2_\theta}\left[\frac{D_\theta}{G^2_\theta}(k_{a} +{k_{e} \over K_{D}}) +k_u\right].
\end{align*}
Since $k_ek_u < 0$, the term outside the brackets is negative. Furthermore, if \eqref{conku} and \eqref{M} are satisfied, the term inside the brackets is also negative, yielding $ \det \{D_d\} > 0$.

We proceed now to prove that the condition \eqref{cond3}, which ensures realisability of the control \eqref{pidcon}, is satisfied. This is established noticing that
$$
\frac{D_\theta(\theta)}{G^2_\theta(\theta)} K_D K(\theta) =\frac{D_\theta(\theta)}{G^2_\theta(\theta)} \left(k_a+ {k_e \over K_D}\right) + k_{u}.
$$
Hence, invoking \eqref{conku}, we have that  \eqref{cond3} holds.

To establish the proof of the second claim we compute the gradient of $V_d$ as
\begin{align*}
\nabla V_{d}&= \begin{bmatrix}
                                       k_ek_u \nabla V_\theta +K_Ik_u \nabla V_N \left(k_{a}z+k_{u}V_{N}\right)   \\K_I k_a\left( k_{a}z+k_{u}V_{N}\right)
                                         \end{bmatrix} \\
                                         &=\begin{bmatrix}
                                                                                 k_ek_uB_\theta-K_Ik_u D_z \left(k_{a}z+k_{u}V_{N}\right)   \\K_I k_a\left( k_{a}z+k_{u}V_{N}\right)
                                         \end{bmatrix}.
\end{align*}
Using the fact that $B_\theta(0)=0$ and $V_N(0)=0$ we conclude that $\nabla V_d(0)=\col(0,0)$.
Now, the Hessian of $V_d$ is given by
\small
 \begin{align*}\nonumber
 \nabla^{2} V_{d}&= \nonumber\begin{bmatrix}
                     k_ek_u \nabla^{2}V_\theta + K_Ik_u\nabla ^{2} ( k_{a}z+k_{u}V_{N})
                                       & -K_Ik_uk_aD_z\\-K_Ik_uk_aD_z & K_Ik_a^{2}
                    \end{bmatrix}\\ \nonumber\\
&=\begin{bmatrix}
    \nu(\theta,z)
                                       & -K_Ik_uk_aD_z\\-K_Ik_uk_aD_z & K_Ik_a^{2}
  \end{bmatrix},
\end{align*}
\normalsize
where we defined the function
$$
\nu(\theta,z):=k_ek_u \nabla^{2}V_\theta+K_Ik_u^2 D_z^2-K_Ik_u\nabla D_z\left( k_{a}z+k_{u}V_{N}\right).
$$
Evaluating it at the origin yields
\begin{align}
\nabla^{2} V_{d}(0)=\begin{bmatrix}
                                          \nu(0) & -K_Ik_uk_aD_z(0) \\
                                         -K_Ik_uk_aD_z(0) &  K_Ik_a^{2}
                                         \end{bmatrix}, \label{evalhess}
\end{align}
where
$$
\nu(0)=k_ek_u \nabla^2 V_\theta(0)+ K_Ik_u^2 D_z^2(0).
$$
\vspace{-.2cm}
Now,
\vspace{-.2cm}
\begin{align*}
\nabla^2 V_\theta(0)=EI\int^{\hat{x}_e(0)}_{0}[\phi '' (x)]^{2}\mathrm{d}x -D_3g\int^{\hat{x}_e(0)}_{0}[\phi ' (x)]^{2}\mathrm{d}x,
\end{align*}
which can be shown to be negative \cite{ojas}. Since $k_ek_u <0$ the $(1,1)$ term of $\nabla^{2} V_{d}(0)$ is positive. Moreover,
$$
\det \{  \nabla^{2} V_{d}(0)\}= k_ek_u \nabla^2 V_\theta(0)  K_Ik_a^{2},
$$
which is also positive, ensuring$\nabla^{2} V_{d}(0)>0$.

The previous analysis ensures that the origin is an isolated minimum for the function $V_d$ as claimed above. The proof is completed invoking classical Lyapunov theory \cite{khalil}.
\end{proof}

\begrem
\lab{rem7}
Notice that the condition $\nabla^2 V_\theta(0)<0$ is consistent with the well known fact that the upward pendulum position is unstable in open--loop. Similarly to the rigid case \cite{DONetal} the maximum of the open--loop potential energy is transformed into a minimum in closed--loop multiplying $V_\theta$ by the negative number $k_ek_u$---see \eqref{Vd}. 
\endrem

\begrem
The critical condition \eqref{conku} is satisfied in a neighborhood of the origin replacing $C$ by
$$
\frac{D_3\phi^2(L)+\rho \mathcal{A}_0 \int_0^L \phi^2(x)dx}{\left[D_3\phi(L)+\rho \mathcal{A}_0 \int_0^L \phi(x)dx\right]^2}=\frac{D_\theta(0)}{G^2_\theta(0)}.
$$
\endrem
\vspace{.1cm}
\begrem
The term $k_{a}z+k_{u}V_{N}(\theta)$ in  \eqref{Vd} is a new potential energy corresponding to a virtual spring attached to the cart---thereby enabling to stabilize the cart position. 
\endrem

\begrem
The choice of the free gains of Proposition \ref{pro2} is given just as an illustration. From the proof it is clear that, depending on the particular problem, other (possibly less conservative) choices are available.
\endrem
%
%%%%%%%%%

%\vspace{-.3cm}
\section{Simulation Results}
\label{sec4}
%%%%%%%%%%%%%%%%%%
%\vspace{-.1cm}

In this section we assess the performance of the proposed controller via Matlab simulations choosing different sets of gains and different initial conditions. We simulated the system \eqref{rEL} in closed--loop with the PID controller \eqref{pidcon}, \eqref{tiltily} with the parameters given in Table~\ref{parameters}. 

We have chosen three different initial conditions, given in Table \ref{ICs}, corresponding to radically different scenarios of the system. Namely, an arbitrary point (ICs1), one of the stable open-loop equilibria (ICs2)
and an initial condition with the cart far from the origin and the tip mass located at the unstable open--loop equilibrium (ICs3).

For the selection of suitable gains for the controller, we fixed the gain $k_e=1$ and linearized the closed--loop system. We based our criterion to choose the gains, always satisfying \eqref{cond3} and \eqref{conku},
and observing the eigenvalues of the closed--loop matrix of the linearized system around the desired equilibrium point. Particular attention has been paid to the eigenvalue closest to the imaginary axis, which is directly related to the rate of convergence of the cart position. Three sets of gains were selected and they are given in Table \ref{Gs}. For the Set 1 the real part of the slowest pole was $-0.58$, $-0.75$ for the Set 2 and $-1.33$ for Set 3.

Simulation results of the energy shaping control are shown in Figures~\ref{sIC1}--\ref{sIC3}, where the variation of the cart position and control input acceleration is
observed to be within practical limits, hence the control objective of simultaneous stabilisation of cart position while suppressing the cantilever vibrations is achieved. 

To evaluate the effect of the gains on the estimate of the domain of attraction of the closed--loop systems provided by the Lyapunov function $H_d$ we show in Figure \ref{lc} the level curves of the desired potential energy $V_d$ for each set of gains. As expected, there is a tradeoff between convergence rate and the size of the domain of attraction---as the slowest closed--loop pole of the linearized system moves farther to the left the closed sublevel sets shrink.
%
%
%%%%%%%%%%%%%%%%%%
%
%%%%%%%%%

%\vspace{-.1cm}
\section{Experimental Results}
%%%%%%%%%%%%%%%%%%
\label{secexp}
%\vspace{-.1cm}

Experiments were also carried out to assess the performance of the proposed controller. The physical description of the experimental setup is provided in \ref{appb}. Although the simulations of the previous section demonstrate effectiveness with three sets of gains, it was found that the control becomes unstable for the same gains in the experiments and it was necessary to retune them via a trail-and-error procedure. We believe this problem was due to a discrepancy in the values of the parameters of the model, which were identified 
in~\cite{ojas}  to capture the multiple equilibria and chaotic behaviour observed experimentally. A different parameter identification process is necessary to reproduce the smooth dynamical transients for small beam deviations, which is the behaviour assumed for the controller design. 

The partial feedback linearization was replaced by the standard procedure of obtaining the desired trajectories for $z$ via the integration of the cart acceleration, which is numerically reconstructed. 

The set of gains used in the experiment are $k_e=1$, $k_a=1$, $k_u=-47.5$, $K_D=1.9$, $K_P=3$ and $K_I=0.9$.  Simulation results are repeated for this set of gains. Figure~\ref{sx} presents the comparison of simulation results with experimental counterparts for the set of initial conditions ICs 2. The results demonstrate that the control task is achieved in a similar time although the trajectory in the experiments shows more oscillations with high frequency components of the vibrations of the beam. These oscilations are not captured by the simulation model that, as explained in Section \ref{sec2}, retains only the first deflection mode.  However, as shown in the plots, these high frequency vibrations degrade the transient performance but do not induce instability.

A video of the experiments  can be watched in https://youtu.be/YBcI9WIaQa0.
%

%%%%%%%%%
%\vspace{-.1cm}
\section{Conclusions and Future Research}
\label{sec5}
%%%%%%%%%%%%%%%%%%
%
%\vspace{-.1cm}
The new energy shaping, fully constructive, controller of \cite{DONetal} has been applied for the stabilisation of the upward unstable position of a cart with an ultra--flexible beam. The dynamics of this system is accurately described by an EL system with an algebraic constraint. The reduction of this dynamics to the constrained manifold is not an EL system, therefore the technique of \cite{DONetal} must be modified for its application in our system. It is shown in the paper that, due to the lossless nature of the constraint forces, it is still possible to identify the passive outputs required for the controller design, which is completed adding a classical PID controller around a suitable combination of these outpus. 

The performance of the proposed controller is assessed via simulations and experiments. The experiments showed that the parameters identified in \cite{ojas} do not capture the dynamic behaviour required for the controller design. A new identification stage, aimed at correcting this deficiency, is currently under way. Another line of research that we are pursuing pertains to the development of a theory for general constrained EL systems that will extend the work done in  \cite{DONetal} for unconstrained EL systems.

%%%
%\vspace{-.1cm}
\section*{Acknowledgment}
%\vspace{-.1cm}
Pablo Borja would like to thank the mexican Secretary of Public Education (SEP) and the mexican National Council of Science and Technology (CONACyT) for supporting his research.

The work of Romeo Ortega was supported in part by the Government of Russian Federation
(Grant 074-U01) and the Ministry of Education and Science of Russian
Federation (Project 14.Z50.31.0031).

\newpage

\onecolumn

\begin{table}[!t]
\caption{System parameters}
%\vspace{-.4cm}
\large
\label{parameters}
\begin{center}
\begin{tabular}{|l|c|c|c|} \hline 
\multicolumn{1}{|c|}{\textbf{Parameter}}
 & \textbf{Symbol} & \textbf{Value} & \textbf{Units} \\\hline
Pendulum cross section area & $\mathcal{A}_o$ & $8 \times 10^{-6}$ & $m^2$\\\hline
Young's modulus & $E$ & $9\times 10^{10}$ & $\frac{N}{m^2}$\\\hline
Gravitational acceleration & $g$ & $9.81$ & $\frac{m}{seg^2}$\\\hline
Moment of inertia & $I$ & $1.066 \times 10^{-13}$ & $kg\cdot m^2$\\\hline
Pendulum length & $L$ & $0.305$ & $m$\\\hline
Tip mass & $M$ & $2.75\times 10^{-2}$ & $kg$\\\hline
Cart mass & $M_c$ & $0.1$ & $kg$\\\hline
Function of the system natural frequency & $\eta$ & $1.1741$ & $-$\\\hline
Dimensionless constant & $\gamma$ & $0.9049$ & $-$\\\hline
Pendulum density & $\rho$ & $8400$ & $\frac{kg}{m^3}$\\\hline
Viscous friction at the pendulum base & $R_1$ & $9.86\times 10^{-4}$ & $\frac{kg}{seg}$\\\hline 
Viscous friction between the rail and the cart & $R_3$ & $7.69$ & $\frac{kg}{seg}$ \\\hline
\end{tabular}
\end{center}
\end{table}

%\twocolumn

%\onecolumn

\begin{table}[!h]
\parbox{10cm}{
\caption{Initial conditions}
\vspace{.28cm}
\hspace{1cm}
\label{ICs}
\begin{tabular}{c|c|c|c|c|}
× & $\theta \ [m]$ & $z \ [m]$ & $\dot{\theta} \ [m/s]$ & $\dot{z} \ [m/s]$ \\\hline
\multicolumn{1}{|c|}{ ICs 1}  & $-0.08$ & $-0.1$ & $0$ & $0$\\\hline
\multicolumn{1}{|c|}{ ICs 2} & $0.134$ & $0 $ & $0$ & $0$\\\hline
\multicolumn{1}{|c|}{ ICs 3} & $0$ & $-0.15$ & $0$ & $0$\\\hline
\end{tabular}
}
\hspace{-2cm}
\parbox{10cm}{
\centering
\caption{Gains sets}
\vspace{.2cm}
\label{Gs}
\begin{tabular}{c|c|c|c|c|c|c|}
× & $\ \ k_e$ \ \ &\ \ $k_a$ \ \ & $k_u$ &\ \ $K_D$ \ \ & \ \ $K_P$ \ \ & \ \ $K_I$ \ \ \\\hline
\multicolumn{1}{|c|}{ Set 1} & $1$ & $0.5$ & $-50.77$ & $1.47$ & $1.94$ & $0.35$ \\\hline
\multicolumn{1}{|c|}{ Set 2} & $1$ & $1$ & $-61.37$ & $1.28$ & $1.92$ & $0.52$ \\\hline
\multicolumn{1}{|c|}{ Set 3} & $1$ & 1 & $-43.04$ & $2.18$ & $3.66$ & $1.35$\\\hline
\end{tabular}
%\caption{Bar}
}
\end{table}

% \begin{table}[!h]
%{\caption{Initial conditions}
%\vspace{-.3cm}
%\label{ICs}
 %\begin{center}
 %\caption{Initial conditions}
%\vspace{-.3cm}
%\small
%\begin{tabular}{c|c|c|c|c|}
%× & $\theta \ [m]$ & $z \ [m]$ & $\dot{\theta} \ [m/s]$ & $\dot{z} \ [m/s]$ \\\hline
%ICs 1 & $-0.08$ & $-0.1$ & $0$ & $0$\\\hline
%ICs 2 & $0.134$ & $0 $ & $0$ & $0$\\\hline
%ICs 3 & $0$ & $-0.15$ & $0$ & $0$\\\hline
%\end{tabular}
%\end{center}
 %} %
 % \end{table}
% \begin{table}[!h]
%{\caption{Gains sets}
%\label{Gs}
%\vspace{-.3cm}
%\scriptsize
% \begin{center}
%\renewcommand{\arraystretch}{1.3}
%\begin{tabular}{c|c|c|c|c|c|c|}
%× & $\ \ k_e$ \ \ &\ \ $k_a$ \ \ & $k_u$ &\ \ $K_D$ \ \ & \ \ $K_P$ \ \ & \ \ $K_I$ \ \ \\\hline
%Set 1 & $1$ & $0.5$ & $-50.77$ & $1.47$ & $1.94$ & $0.35$ \\\hline
%Set 2 & $1$ & $1$ & $-61.37$ & $1.28$ & $1.92$ & $0.52$ \\\hline
%Set 3 & $1$ & 1 & $-43.04$ & $2.18$ & $3.66$ & $1.35$\\\hline
%\end{tabular}
%\end{center}
%\normalsize
%}
%\end{table}

 \begin{figure}[!h]
 \hspace{-.23cm}
\includegraphics[scale=.4]{./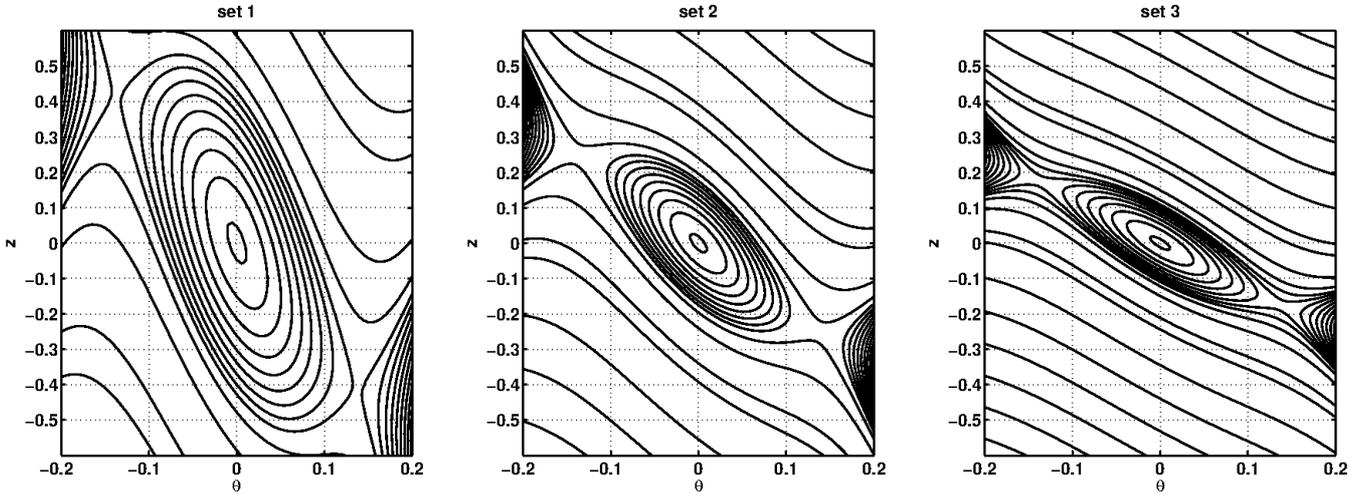}
\caption{Level curves of the desired potential energy $V_d(\theta,z)$ for the different sets of gains}
\label{lc}
%\end{minipage}
\end{figure}

\newpage

\begin{figure}[!t]
 \onecolumn
 \hspace{-2cm}
 \includegraphics[scale=.4]{./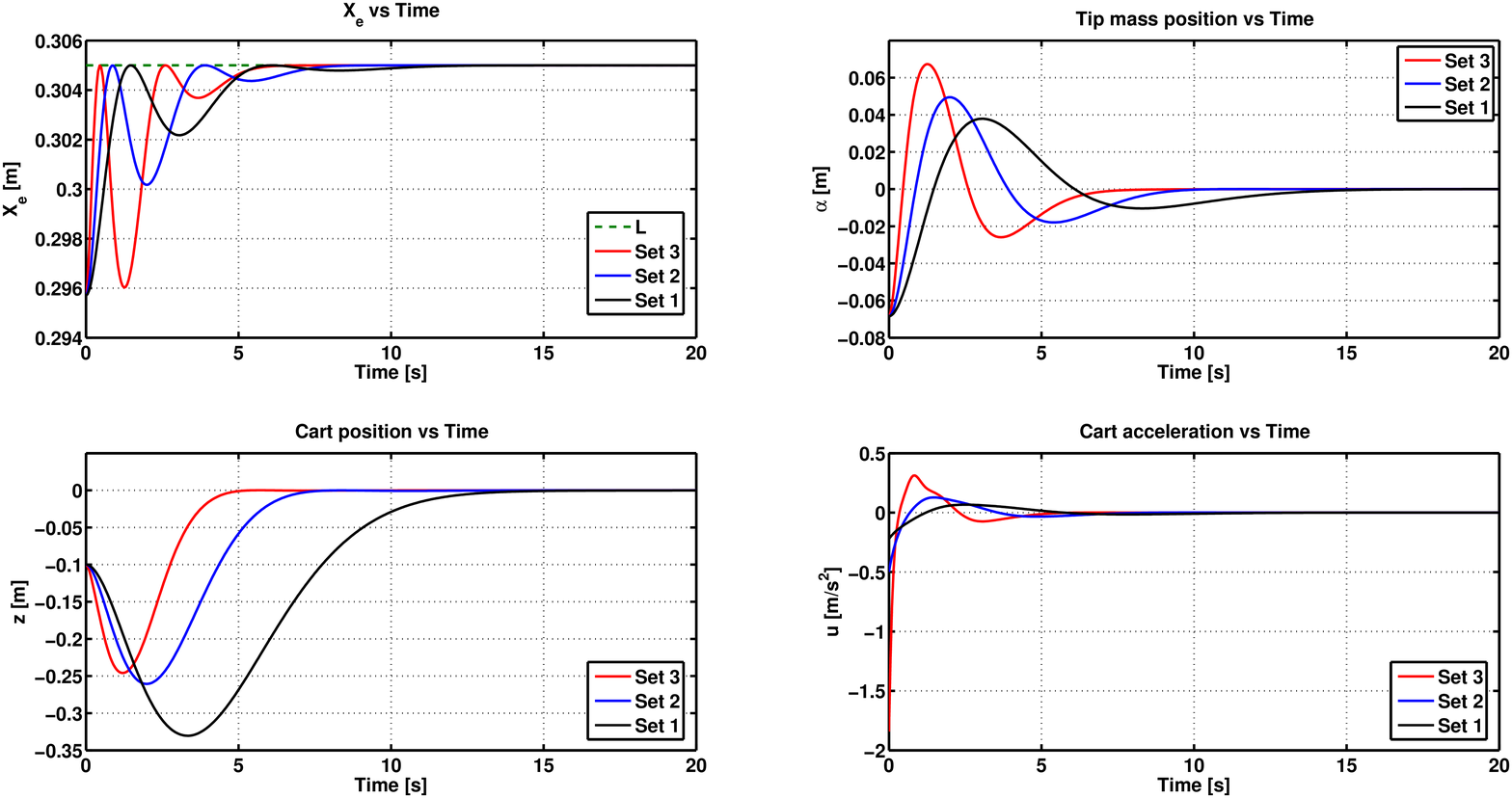}
 % edosic1.eps: 0x0 pixel, 300dpi, 0.00x0.00 cm, bb= -459    30  1072   762
\caption{ Simulation results for ICs 1}
 \label{sIC1}
%\end{figure}
%\twocolumn
%\begin{figure}
%\onecolumn
 \hspace{-2cm}
 \includegraphics[scale=.4]{./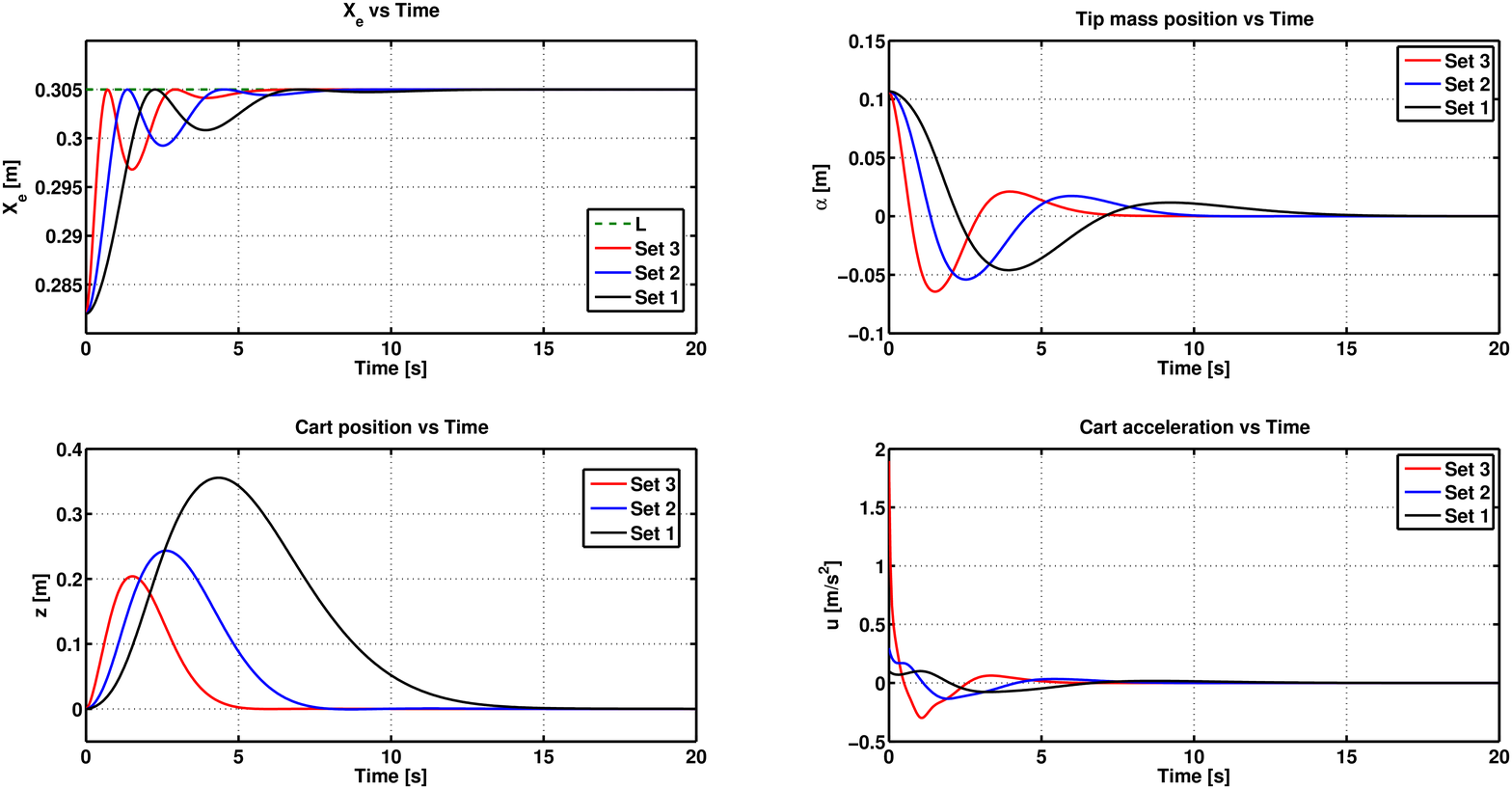}
 % edosic2.eps: 0x0 pixel, 300dpi, 0.00x0.00 cm, bb= -459    33  1072   757
 \caption{Simulation results for ICs 2}
 \label{sIC2}
\end{figure}
\twocolumn

\newpage
\onecolumn
 \begin{figure}[!t]
 %\begin{minipage}[!t]{\textwidth}
  \hspace{-2cm}
 \includegraphics[scale=.4]{./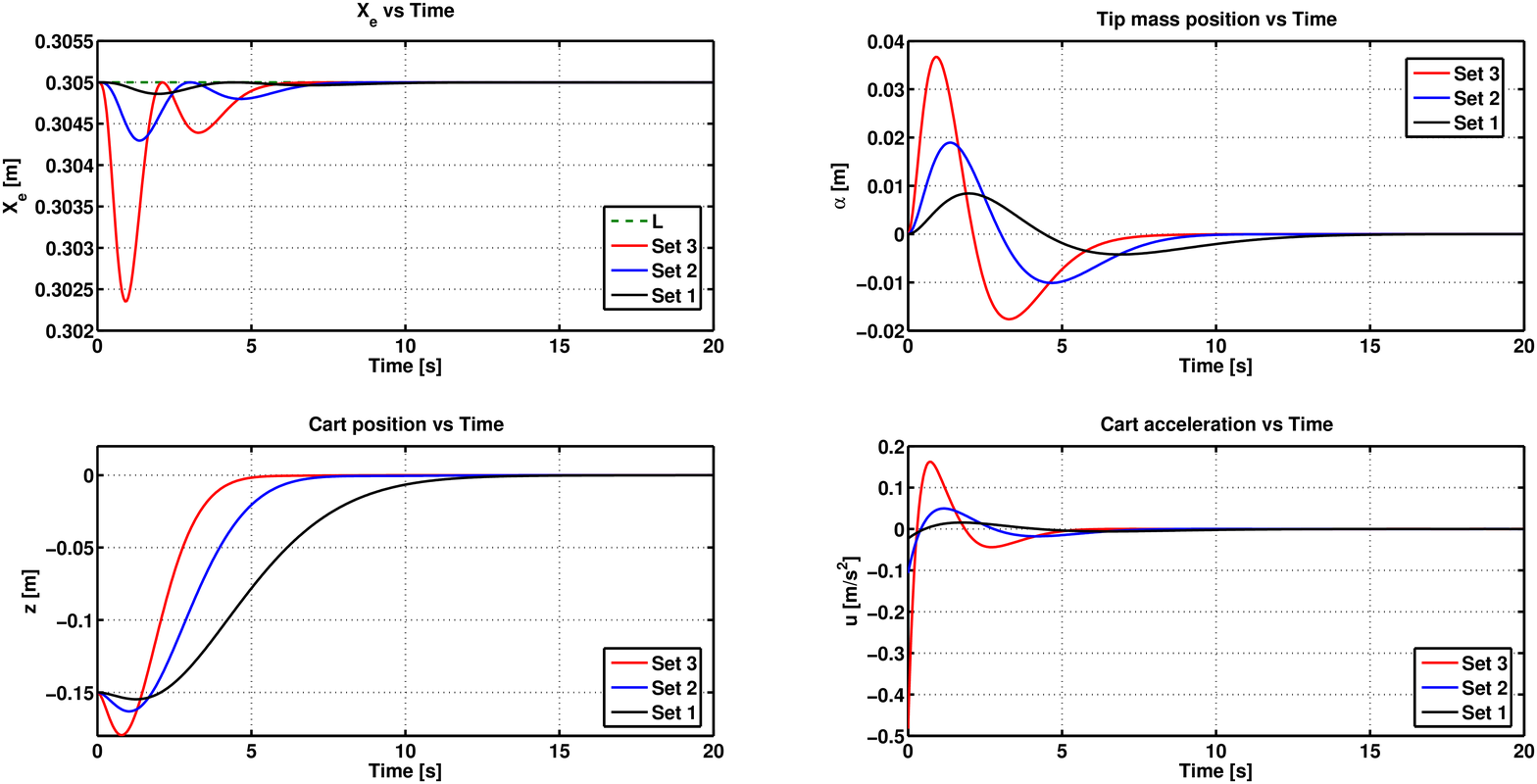}
 % edosic3.eps: 0x0 pixel, 300dpi, 0.00x0.00 cm, bb= -459    40  1072   752
 \caption{Simulation results for ICs 3}
 \label{sIC3}
 \end{figure}

  \begin{figure}[!t]
  \onecolumn
\hspace{-1.7cm}
 \includegraphics[scale=.36]{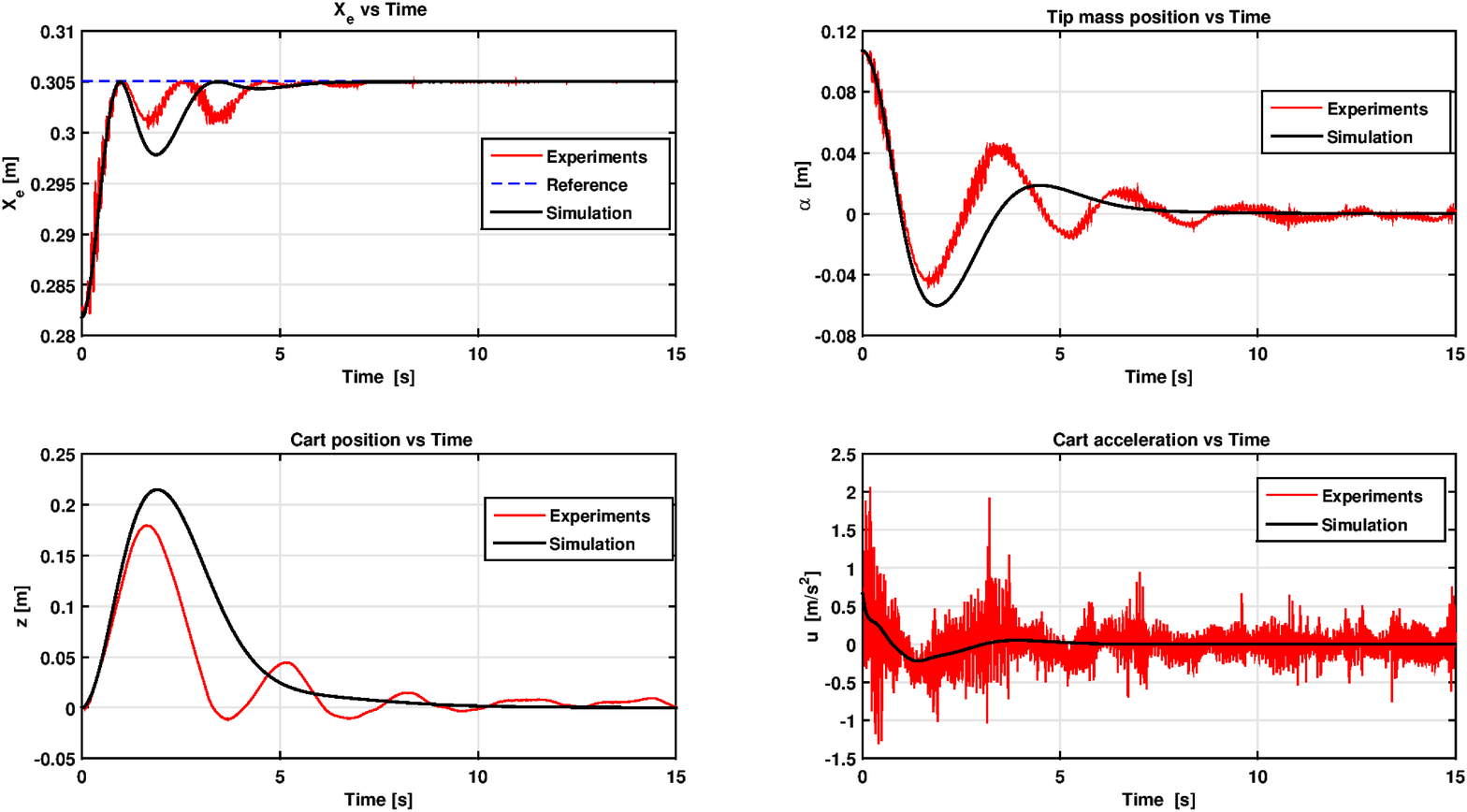}
  %edosic3.eps: 0x0 pixel, 300dpi, 0.00x0.00 cm, bb= -459    40  1072   752
 \caption{Comparison of simulation and experimental results for ICs 2}  
 \label{sx}
\end{figure}

 \twocolumn
%%%%%%%%%%%%%%

\begin{figure}[!t]
  \centering
 %\hspace{-2cm}
 \includegraphics[width=.55\linewidth]{./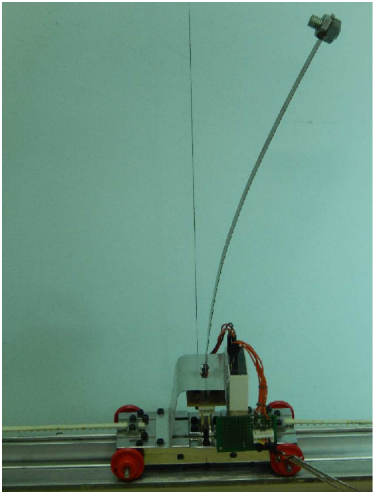}
  %edosic3.eps: 0x0 pixel, 300dpi, 0.00x0.00 cm, bb= -459    40  1072   752
 \caption{Inverted flexible pendulum.}  
 \label{pendulum}
\end{figure}

\bibliographystyle{elsarticle-num-names}
\bibliography{biblio}

\appendix

\section{Details of the System Dynamics}
\lab{appa}
The functions below are the elements of the matrices $A$, $B$, $C$ and $D$ in system \eqref{EL}.
\small
%\vspace{-.2cm}
\begin{align*}
A_1 (\theta,x_e) &:=   \int_{0}^{x_e}\frac{\theta[\phi'(x)]^2}{\sqrt{1+\left[\theta\phi'(x)\right]^2}}\mathrm{d} x,\\
  A_2 (\theta,x_e) &:=\sqrt{1+\left[\theta\phi'(x_e)\right]^2},\; A_3 (\theta,x_e) := \frac{2\theta[\phi'(x_e)]^2}{\sqrt{1+\left[\theta\phi'(x_e)\right]^2}},\\ A_5 (\theta,x_e)  &:=   \int_{0}^{x_e}\frac{[\phi'(x)]^2}{\left\lbrace1+\left[\theta\phi'(x)\right]^{2}\right\rbrace^{\frac{3}{2}}}\mathrm{d} x. \\
 B_1 (\theta,x_e) &:= EI\int_{0}^{x_e}\frac{\theta[\phi''(x)]^2\left\lbrace 1- 2[\theta\phi'(x)]^2\right\rbrace}{\left\lbrace1+\left[\theta\phi'(x)\right]^2\right\rbrace ^4}\mathrm{d} x,\\ B_2 (\theta,x_e) &:= \hal  \frac{EI[\theta\phi''(x_e)]^2}{\left\lbrace1+\left[\theta\phi'(x_2)\right]^2\right\rbrace ^3}+D_3g.\\
 C_1 (x_e) &:=2D_3\phi(x_e)\phi'(x_e), \; \; C_2 (x_e) :=D_3\phi'(x_e). \\
   D_1 (x_e) &:=\rho \mathcal{A}_0 \int^L_0 [\phi(x)]^2 \mathrm{d} x+D_3[\phi(x_e)]^2,\\    D_2 (x_e) &:= D_3\phi(x_e)+\rho \mathcal{A}_0 \int^L_0 \phi(x) \mathrm{d}x \\ D_4&:= D_3+M_c+\rho \mathcal{A}_0 L.
\end{align*}
\normalsize
\section{Experimental Implementation }
\lab{appb}

Figure~\ref{pendulum} shows the picture of the setup used for experimental implementation. A fatigue resistant Cu-Be alloy material is used for fabrication of the beam. Cart is guided by a rail and driven through a toothed belt driven by a motor (Maxon Motor AG: 236670). An encoder reads the position $z$ of the motor and hence the cart. An H-bridge amplifier (Nex Robotics Hercules 36V,15A) is used to drive the motor. Strain gauges (TML Tokyo Sokki Kenkyujo Co.: FLA-5-11) in full bridge configuration along with an amplifier (DataQ Instruments 5B38-02) are used for feedback $\theta$. The derivatives $\dot\theta$ and $\dot z$ are computed numerically using a digital derivative filter. Interfacing of the motor, strain amplifier, and encoder is done with data acquisition system ds 1104 from dSPACE GmbH via PWM, DAC, and encoder interfaces. Careful horizontal levelling of the cart and rail, and meticulous adjustment of the beam and the center of gravity of the tip mass is carried out to make sure that the unstable equilibrium is perfectly vertical and other equilibria are symmetric about the unstable equilibrium position. Several nonlinear terms in the control law~(\ref{pidcon}) are integral function of $\theta$ with length constraint giving $\dot x_e$ as limit of integration and thus are computationally demanding to evaluate in real time. Hence a look up table arrangement is used for evaluation of these terms in real time. Appropriate signal conditioning is used to balance detrimental effects of noise on one side and filter delay on the other.

\end{document}